\documentclass{article}
\usepackage{PRIMEarxiv}
\usepackage[utf8]{inputenc}
\usepackage[T1]{fontenc}
\usepackage{microtype}
\usepackage[authoryear,round]{natbib}

\usepackage{amsmath,amsfonts,bm}

\def\1{\bm{1}}

\DeclareMathAlphabet{\mathsfit}{\encodingdefault}{\sfdefault}{m}{sl}
\SetMathAlphabet{\mathsfit}{bold}{\encodingdefault}{\sfdefault}{bx}{n}

\newcommand{\E}{\mathbb{E}}

\usepackage{url}
\usepackage{algorithm}
\usepackage{algpseudocode}
\usepackage{graphicx}
\usepackage{amsthm}
\usepackage{booktabs}
\usepackage{placeins}
\usepackage{xcolor}
\usepackage[colorlinks,linkcolor=blue,citecolor=blue,linktoc=page]{hyperref}
\hypersetup{pdftitle={Marrying Pricing and Advertising with LLMs},pdfauthor={Alessandro Barro, Francesco Bacchiocchi, Francesco Emanuele Stradi, Alberto Marchesi}}

\newtheorem{proposition}{Proposition}[section]

\title{Marrying Pricing and Advertising with LLMs}
\author{%
  \normalfont
  \begin{tabular}{@{}c@{\hspace{0.02\textwidth}}c@{}}
    \makebox[0.47\textwidth]{\textbf{Alessandro Barro}} &
    \makebox[0.47\textwidth]{\textbf{Francesco Bacchiocchi}} \\
    Politecnico di Milano & Politecnico di Milano \\
    \texttt{alessandro1.barro@mail.polimi.it} &
    \texttt{francesco.bacchiocchi@polimi.it} \\[1.5em]
    \textbf{Francesco Emanuele Stradi} & \textbf{Alberto Marchesi} \\
    Politecnico di Milano & Politecnico di Milano \\
    \texttt{francescoemanuele.stradi@polimi.it} &
    \texttt{alberto.marchesi@polimi.it}
  \end{tabular}%
}
\date{}

\begin{document}

\maketitle

\begin{abstract}
We study a \emph{sequential pricing} problem in which a seller jointly
posts a \emph{price} and an \emph{advertisement} generated by a
\emph{large language model} (LLM). The seller aims to maximize revenue
under an  unknown product demand that depends on both decisions, while observing only
whether each offer leads to a purchase. We propose an online actor-critic
algorithm that combines \emph{low-rank adaptation} (LoRA) of a pretrained LLM
with a demand model fitted to available data. At each round, the actor
generates an advertisement, and the critic estimates purchase probabilities
to guide price selection. Then, the resulting feedback is used to
update both the actor and the critic, with the critic's revenue estimates
providing a baseline for policy gradient updates of the actor. To evaluate
our approach, we develop an evaluation framework with three synthetic demand models
and a demand simulator built from real-world marketplace data. 
Finally, we compare our algorithm with benchmarks that do not jointly optimize price selection and advertisement generation, achieving expected revenue gains over the reference policy of \textbf{5.69\%}, \textbf{5.18\%} and \textbf{55.96\%} under the three synthetic demand models and \textbf{5.81\%} under the marketplace simulator.
\end{abstract}

\section{Introduction}

Online marketplaces have deeply changed how sellers set prices. Indeed, a seller can revise the price of a product at negligible cost and observe how buyers respond to the new price shortly after. Moreover, a single seller may offer a large catalog of products, so that prices must be set by automated rules rather than by hand \citep{misra2019dynamic}. However, the demand for each product is typically unknown in advance. Thus, the seller must learn which prices are profitable while collecting revenue from the same interactions used for learning. This trade-off is captured by the framework of sequential posted pricing, in which a seller repeatedly offers a product to arriving buyers and uses the resulting purchase feedback to guide subsequent price selection \citep{kleinberg2003value}. Building on this framework, a large body of work studied how to learn an unknown demand from purchase outcomes (see, e.g., \citealp{besbes2009dynamic}, and the survey of \citealp{denboer2015dynamic}).

Notice that the price is not the only decision that shapes demand. Indeed, the seller also decides how to present the product, and the advertisement shown to buyers may change their willingness to purchase. In a large-scale experiment, \citet{bertrand2010advertising} studied loan offers mailed by a lender to more than 50{,}000 former clients, randomizing independently the interest rate and several features of the letter's content. Some content features affected loan demand as much as large price changes. For example, showing a single example loan instead of four had the same effect as lowering the monthly interest rate by $2$ percentage points. Hence, pricing and advertising cannot be optimized separately. Indeed, an advertisement that increases demand also shifts the revenue-maximizing price, and conversely the value of an advertisement depends on the price at which the product is offered.

At the same time, the way sellers produce advertisements is changing. Indeed, product descriptions in online marketplaces are increasingly generated by large language models (LLMs) rather than written entirely by hand. For instance, Amazon released LLM-based tools that generate product titles
and descriptions from a short input provided by the seller.\footnote{www.aboutamazon.com/news/small-business/amazon-sellers-generative-ai-tool} Moreover, recent works fine-tuned LLMs on performance feedback to produce more effective advertisements. \citet{jiang2025improving} trained an LLM through reinforcement learning, using historical advertisement performance as a reward. In a ten-week experiment involving nearly 35{,}000 advertisers, the trained model increased click-through rates by 6.7\% with respect to a supervised imitation model. Similarly, \citet{chen2025ctr} and \citet{wang2026relate} aligned advertisement generation with click-through and conversion rates observed online. However, these works only optimized the advertisement, and did not consider the price of the advertised product.

When advertisements take the form of natural language, a joint pricing and advertising policy must also decide how content is generated and adapted in response to feedback. LLMs offer a natural way to parametrize this decision, generating advertisements conditioned on information about the product and its environment. In this paper, we study a sequential setting where a seller presents each arriving buyer with a price and an LLM-generated advertisement, and observes only whether a purchase occurs. This binary feedback reflects the combined effect of both decisions and reveals nothing about how the buyer would have responded to alternatives. Since the space of advertisements is too large to explore exhaustively, the seller must generalize from previous interactions, both when generating text and when selecting prices. In addition, optimizing solely for revenue may lead the policy to deviate substantially from the reference model, potentially compromising the fluency and coherence of the generated content. Controlling this deviation helps preserve the language capabilities of the pretrained model while allowing the seller to adapt its advertisements. In this work, we address these challenges through an online actor-critic approach that combines adaptation of a pretrained LLM with demand estimation, using purchase outcomes to update both components and guide their joint optimization.

\subsection{Original contribution}
We propose an online actor-critic algorithm for joint price and advertisement optimization, operating in batches of equal size indexed by $\tau$. The actor is a pretrained LLM fine-tuned through low-rank adaptation (LoRA) to generate advertisements, while the critic is a demand model fitted to available data that estimates purchase probabilities given price, advertisement and product-level context. For each advertisement generated in batch $\tau$, the seller selects a price through an exploration rule guided by the critic. With probability $\varepsilon_\tau$, the price is sampled from a given exploration distribution $\nu_\tau$ on $[0,1]$. Otherwise, the seller selects a price maximizing the critic's estimate of expected revenue. The observed purchase outcomes are used to update the critic's demand estimates and the actor's generative policy for the advertisement. The actor updates its LoRA parameters through policy gradient steps to maximize an objective that balances expected revenue with a KL penalty for deviations from the fixed pretrained reference model. These updates use a leave-one-out baseline that averages the critic's predicted revenues instead of observed rewards, reducing the noise introduced by individual purchase outcomes.

Section~\ref{sec:experiments} evaluates expected revenue under explicit synthetic demand and studies transfer to marketplace products unseen during actor training. We consider \emph{logit}, \emph{probit}, and \emph{price disclosure} demand, together with a simulator fitted to marketplace listings. The synthetic models examine how advertisement content affects purchase propensity and price sensitivity. We compare the \texttt{Trained} and \texttt{Reference} policies using the same final critic, isolating advertisement adaptation under shared learned pricing. In the three synthetic environments, expected revenue increases by \textbf{5.69\%}, \textbf{5.18\%}, and \textbf{55.96\%}, respectively. The largest gain is fully explained by the increased frequency of the rewarded content feature under the declared demand mechanism. Under the marketplace simulator obtained from real data, average normalized expected revenue increases by \textbf{5.81\%}, with improvements on 91 of 100 products across five categories. We also compare revenue during learning with \texttt{UCB} under equal online interaction budgets. Appendices~\ref{app:exploration} and~\ref{app:baseline} examine performance with progressively less offline data and the mean squared error of the critic-based baseline, respectively.

\subsection{Related Work}
Sequential pricing with unknown demand has been extensively studied in online learning and revenue management. \citet{kleinberg2003value} studied auctions with posted prices in which a seller learns from purchase decisions and characterized the revenue loss associated with not knowing the demand curve. \citet{besbes2009dynamic} developed pricing policies that learn demand while collecting revenue, considering both parametric and nonparametric demand models. \citet{misra2019dynamic} subsequently proposed a dynamic pricing approach based on multiarmed bandit experiments, incorporating restrictions from economic choice theory to guide price exploration. We refer to \citet{denboer2015dynamic} for a survey of dynamic pricing and learning. The interaction between pricing and advertising also has a long history in economics: \citet{dorfman1954optimal} analyzed the joint choice of price, advertising expenditure and product quality. Focusing on the content of advertisements, \citet{bertrand2010advertising} conducted a field experiment that independently randomized loan prices and advertising features, showing that content can have substantial effects on demand relative to price. More recently, \citet{agrawal2023persuasion} studied joint pricing and advertising in a sequential setting with unknown demand. In their formulation, the seller learns from past purchase responses to select both a price and a signaling scheme that discloses information about product quality.

A growing body of work adapts LLMs for advertisement generation using performance feedback. For instance, \citet{jiang2025improving} trained an advertisement generator with reinforcement learning guided by a reward model fitted to historical advertisement performance and evaluated the resulting model in a large advertising experiment. Focusing on feedback collected online, \citet{chen2025ctr} generated diverse candidate advertisements from effective exemplars, tested them online and optimized the generator using preferences derived from observed click rates. More recently, \citet{wang2026relate} optimized advertisement generation with reinforcement learning, using an objective that combines predicted conversion rates with quality and diversity criteria. These methods improve advertising effectiveness without jointly optimizing the selling price. Alongside these applications, our approach draws on a broader methodological literature on the adaptation of language models. \citet{ziegler2019finetuning} studied learning from human preferences with a KL penalty to control deviations from a reference policy, while \citet{hu2022lora} introduced LoRA, which learns updates of low rank while keeping pretrained weights fixed. For policy optimization, REINFORCE provides a gradient estimator based on sampled outputs and their rewards \citep{williams1992simple}. Building on this estimator, \citet{kool2019buy} studied how multiple samples for the same input can be used to construct baselines that reduce its variance. When it comes to LLMs, \citet{ahmadian2024back} further showed that REINFORCE variants with leave-one-out baselines can effectively support learning from human feedback.

\section{Problem formulation}
\label{sec:formulation}
In the standard formulation of sequential posted pricing~\citep{kleinberg2003value}, at each round a seller posts a price
$p \in [0,1]$ for an item, and an arriving buyer purchases it if their private valuation exceeds $p$.
When buyers' valuations are independently drawn from a fixed distribution, this interaction can be
equivalently described by a demand curve $d : [0,1] \to [0,1]$, where $d(p)$ denotes the probability of
purchase at price $p$. In the real world, however, the seller must specify not only a price $p \in [0,1]$,
but also an advertisement $x \in \mathcal{X}$, where $\mathcal{X}$ denotes the space of possible
advertisements shown to buyers. Each offer is associated with a context vector $z \in \mathcal{Z}$
representing characteristics of the product and the environment, which the seller observes before
choosing the advertisement and price. Since both the advertisement and the context may affect
engagement and the likelihood of purchase, we employ a demand function
$d : [0,1] \times \mathcal{X} \times \mathcal{Z} \to [0,1]$ to model buyer response, where $d(p,x,z)$ is the probability of
purchase at price $p$, given advertisement $x$ and context $z$.

We focus on textual advertisements $x\in\mathcal X$ generated by a large language model $\pi_\theta$ with parametrization $\theta\in\Theta$. For an observed context $z$, we generate $x\sim\pi_\theta(\cdot\mid c(z))$, where $c(z)$ is a context-dependent prompt.\footnote{We assume that the mapping $c(z)$ is fixed and known to the seller.} The seller also selects a price through a pricing rule $q:\mathcal X\times\mathcal Z\to[0,1]$, and $q(x,z)$ denotes the price for advertisement $x$ and context $z$, before observing purchase feedback. In this paper, we restrict $\Theta$ to the parametrizations obtained through low-rank adaptation (LoRA)~\citep{hu2022lora} of a fixed pretrained reference model. We group its parameters as $\theta_{\mathrm{ref}}=((W_{\ell,\mathrm{ref}})_{\ell=1}^{L},\theta_{\mathrm{ref,rest}})$, where $W_{\ell,\mathrm{ref}}\in\mathbb R^{d_\ell\times k_\ell}$ are the weight matrices selected for adaptation and $\theta_{\mathrm{ref,rest}}$ contains the remaining parameters. For each selected matrix, we fix a rank $r_\ell\ll\min(d_\ell,k_\ell)$ and introduce trainable adapter matrices $A_\ell\in\mathbb R^{r_\ell\times k_\ell}$ and $B_\ell\in\mathbb R^{d_\ell\times r_\ell}$. We collect these matrices in $\psi=((A_\ell,B_\ell))_{\ell=1}^{L}\in\Psi$, where $\Psi$ denotes the space of adapter parameters. Thereby, we introduce the \emph{low-rank adaptation map} $\Delta\theta$, which assembles the scaled matrix products $B_\ell A_\ell$ into a parameter update, with zero entries for all unadapted parameters. Formally,
\begin{equation}
\Delta\theta:\psi\longrightarrow
\left(
\left(\frac{\alpha_\ell}{r_\ell}B_\ell A_\ell\right)_{\ell=1}^{L},
0
\right),
\label{eq:low-rank-adaptation-map}
\end{equation}
where $\alpha_\ell>0$ are fixed scaling coefficients. Thus, each selected weight matrix receives an additive update of rank at most $r_\ell$, while the zero block leaves all remaining parameters unchanged. Accordingly, we define
\(
\Theta:=\left\{\theta_{\mathrm{ref}}+\Delta\theta(\psi)\mid\psi\in\Psi\right\}.
\)
Working within $\Theta$ therefore amounts to updating only the adapter parameters $\psi$, while keeping $\theta_{\mathrm{ref}}$ fixed. 

Finally, we let $\mu$ denote the probability distribution of product and environmental contexts on $\mathcal{Z}$ across offers and consider a repeated interaction over $T$ rounds between the seller and a sequence of buyers.\footnote{For $n \in \mathbb{N}_{>0}$, let $[n] := \{1,\ldots,n\}$.} For each round $t\, \in\, [T]$, the interaction with the current buyer is summarized as follows:
\begin{enumerate}
    \item The seller observes the context $z_t \sim \mu$ of an offer to an arriving buyer.
    \item The seller selects a model parametrization $\theta_t\in\Theta$ and a pricing rule
    $q_t$. The seller generates an advertisement $x_t\sim\pi_{\theta_t}(\cdot\mid c(z_t))$,
    sets the price $p_t=q_t(x_t,z_t)$, and displays the advertisement-price pair $(x_t,p_t)$.\footnote{ The seller may randomize the choice of $q_t$ to explore prices.}
    \item The seller observes the binary purchase outcome $y_t\in\{0,1\}$, where $y_t=1$ if the buyer
    purchases the item and $y_t=0$ otherwise, with
    $y_t\mid p_t,x_t,z_t\sim\mathrm{Ber}\bigl(d(p_t,x_t,z_t)\bigr)$.
    \item The seller receives revenue $r_t:=p_ty_t$.\footnote{We use the terms reward and revenue interchangeably.}
\end{enumerate}
This model implies that the conditional expected selling probability and revenue satisfy, respectively,
\[
\mathbb{E}[y_t \mid p_t,x_t,z_t] = d(p_t,x_t,z_t),
\qquad
\mathbb{E}[r_t \mid p_t,x_t,z_t] = p_t\,d(p_t,x_t,z_t),
\]
For a pricing rule $q : \mathcal{X} \times \mathcal{Z}\! \to\! [0,1]$, model parametrization $\theta \!\in\! \Theta$ and fixed $z$, expected revenue is:
\[
u(q,\theta \mid z ) :=
\mathbb{E}_{x \sim \pi_\theta(\cdot \mid c(z))}
\bigl[q(x,z)\,d(q(x,z),x,z)\bigr].
\]
If the seller's sole goal were to maximize expected revenue, the objective would be to learn a pair $(q,\theta)$ maximizing $u(q,\theta \mid z)$ for every context $z \in \mathcal{Z}$. In practice, however, optimizing only for revenue may drive
the distribution of generated advertisements too far from the reference language policy. To mitigate this drift, we consider a KL-regularized objective with respect to a fixed reference model
$\pi_{\theta_\mathrm{ref}}$. This approach is analogous to KL-regularized reinforcement learning from
human feedback (RLHF), where a policy is optimized for reward while being penalized for deviating
from a reference model~\citep{ziegler2019finetuning}. Thus, for any pricing rule $q$ and
parametrization $\theta\in\Theta$, we define:
\begin{equation}\label{eq:regularized-objective}
u_\beta(q,\theta \mid z ) := u(q,\theta \mid z) - \beta\,
D_{\mathrm{KL}}\!\left(\pi_\theta(\cdot \mid c(z)) \,\|\, \pi_{\theta_\mathrm{ref}}(\cdot \mid c(z))\right),
\end{equation}

where $\beta>0$ controls the trade-off between expected revenue and proximity to the reference model.\footnote{ For two probability distributions $P$ and $Q$ on a discrete space $\mathcal S$, the Kullback-Leibler divergence is defined as \( D_{\mathrm{KL}}(P\|Q)
		:=\sum_{s\in\mathcal S}P(s)\log( {P(s)}/{Q(s)} ), \) where terms with $P(s)=0$ contribute zero. }
This discourages overoptimization of the objective by penalizing deviations from the original policy.

\section{Online actor-critic algorithm}
\label{sec:algorithm}

In this section, we introduce Algorithm~\ref{alg:actor-critic}, which uses an actor-critic approach to jointly learn price selection and advertisement generation. In our approach, the actor is a pretrained LLM adapted through low-rank adaptation (LoRA) to generate advertisements, while the critic estimates purchase probabilities given a price, an advertisement, and the context. We describe our algorithm for a fixed context $z$. We first divide the time horizon $T$ into $N=T/B$ batches of size $B$, assuming, for ease of presentation, that $T$ is a multiple of $B$ and that $B\geq 2$. Before the first batch, we initialize the LoRA adapter so that the actor coincides with the pretrained reference model, whose parameters $\theta_{\mathrm{ref}}$ remain fixed throughout the algorithm. Then, at the beginning of each batch $\tau\in[N]$, we set $\theta_\tau=\theta_{\mathrm{ref}}+\Delta\theta(\psi_\tau)$ using the current adapter parameters $\psi_\tau$, as specified at Line~\ref{line:jpa-policy}. During each batch, the actor generates advertisements, while the critic, parametrized by $\phi_\tau$, estimates purchase probabilities given a price, an advertisement, and the context. We keep the actor and critic parameters fixed throughout each batch.
\begin{algorithm}[t]
    {
    \setlength{\abovedisplayskip}{2pt}
    \setlength{\belowdisplayskip}{2pt}
    \setlength{\abovedisplayshortskip}{1pt}
    \setlength{\belowdisplayshortskip}{2pt}
	\caption{Pricing \& Advertising via Low-Rank Adaptation and REINFORCE Leave-One-Out}\label{alg:actor-critic}
	\begin{algorithmic}[1]
		\Require Fixed context $z$, actor reference model $\theta_{\mathrm{ref}}$, low-rank adapter $\psi_1$, critic reference model $\phi_1$, actor and critic batch sizes $B,S\geq 2$, integer critic:actor update ratio $K\geq1$, time horizon $T$, KL coefficient $\beta$, critic regularization coefficient $\rho$, exploration distributions $\nu_\tau$ on $[0,1]$, exploration probabilities $\varepsilon_\tau$, actor learning rate $\eta_\tau$, critic learning rate $\gamma_\tau$, for every $\tau\!\in\![T/B]$
		
		\State Set $N=T/B$ and initialize the buffer $\mathcal B=\emptyset$
		\For{$\tau=1,\ldots,N$}
		\State Set $\theta_\tau=\theta_{\mathrm{ref}}+\Delta\theta(\psi_\tau)$ \label{line:jpa-policy}
		\For{$b=1,\ldots,B$}\label{line:jpa-actor-condition}
		\State Sample an advertisement $x_{\tau,b}\sim\pi_{\theta_\tau}(\,\cdot\,\mid\,c(z))$\label{line:jpa-ad}
		\State With probability $\varepsilon_\tau$, sample a price $p_{\tau,b}\sim \nu_\tau$; otherwise select it according to \label{line:jpa-price}
		\[
		p_{\tau,b} \in
			\displaystyle\arg\max_{p\in[0,1]} p\,\hat d_{\phi_\tau}(p,x_{\tau,b},z)
		\]
		\State Observe the feedback $y_{\tau,b}\sim\operatorname{Ber}(d(p_{\tau,b},x_{\tau,b}, z))$ and the reward $r_{\tau,b}=p_{\tau,b}\,y_{\tau,b}$  \label{line:jpa-feedback}
		\EndFor
		\If{$\tau\mod K=0$}\label{line:jpa-actor-condition-2}
		\State For every $j,b\in[B]$, compute the revenue advantage estimates\label{line:jpa-advantage}
        {
        \setlength{\belowdisplayskip}{0pt}
		\[
		\widehat u_{\phi_\tau,j}:=p_{\tau,j}\widehat d_{\phi_\tau}(p_{\tau,j},x_{\tau,j},z),
		\qquad
		A_{\tau,b}:=r_{\tau,b}-\frac{1}{B-1}\sum_{j\in[B]:\,j\neq b}\widehat u_{\phi_\tau,j}
		\]
        }
		\State Define the actor surrogate loss\label{line:jpa-actor-loss}
		\[
		\mathcal L_{\mathrm{actor}}(\psi)
		=-\frac1B\sum_{b=1}^{B}
		\left(A_{\tau,b}-\beta\log
		\frac{\pi_{\theta_\tau}(x_{\tau,b}\mid c(z))}
		{\pi_{\theta_{\mathrm{ref}}}(x_{\tau,b}\mid c(z))}\right)
		\log\pi_{\theta_{\mathrm{ref}}+\Delta\theta(\psi)}(x_{\tau,b}\mid c(z))
		\]
		\State Update $\psi_\tau$ to $\psi_{\tau+1}$ with one optimization step on $\mathcal L_{\mathrm{actor}}$ using learning rate $\eta_\tau$\label{line:jpa-actor-step}
		\Else
		\State Set $\psi_{\tau+1}=\psi_\tau$
		\EndIf
		\State For every $b \in [B]$, let $e_{\tau,b}=(x_{\tau,b},p_{\tau,b},y_{\tau,b})$ and update the buffer $\mathcal B = \mathcal B \cup(e_{\tau,b})_{b=1}^B$\label{line:jpa-buffer}
		\State Sample $(e_s)_{s=1}^S$ uniformly from the buffer $\mathcal B$ and define the critic loss
		\[
		\mathcal L_{\mathrm{critic}}(\phi)
		=-\frac{1}{S}\sum_{s=1}^S \left[ y_s\log \hat d_{\phi}(p_s,x_s, z) +(1-y_s)\log\left(1-\hat d_{\phi}(p_s,x_s, z)\right)\right]+\rho\,\lVert\phi-\phi_1\rVert_2^2
		\]
		\State Update $\phi_\tau$ to $\phi_{\tau+1}$ with one optimization step on $\mathcal L_{\mathrm{critic}}$ using learning rate $\gamma_\tau$\label{line:jpa-critic-step}
		\EndFor
		
		\State \Return $\psi_{N+1},\phi_{N+1}$
	\end{algorithmic}
    }
\end{algorithm}

Given a batch $\tau$, we collect $B$ seller-buyer interactions indexed by $b=1,\ldots,B$. In each interaction $b\in[B]$, the seller first generates an advertisement $x_{\tau,b}\sim\pi_{\theta_\tau}(\cdot\mid c(z))$.\footnote{In Algorithm~\ref{alg:actor-critic}, for $t=(\tau-1)B+b$, the quantities $x_{\tau,b}$, $p_{\tau,b}$, $r_{\tau,b}$ and $y_{\tau,b}$ correspond to $x_t$, $p_t$, $r_t$, and $y_t$, respectively, in the learning interaction. Similarly, $\theta_\tau$ corresponds to $\theta_t$ for all $t$ such that $(\tau-1)B+1\leq t\leq \tau B$. The actor parameters and the pricing strategy remain fixed throughout the batch.} Then, the seller selects a price according to an $\varepsilon_\tau$-greedy pricing rule, as specified at Line~\ref{line:jpa-price}. In particular, with probability $1-\varepsilon_\tau$, the seller selects a price maximizing the expected revenue estimated using the critic, namely,
\[
p_{\tau,b}\in\arg\max_{p\in[0,1]}p\,\widehat d_{\phi_\tau}(p,x_{\tau,b},z).
\]
With probability $\varepsilon_\tau$, the seller instead samples $p_{\tau,b}\sim\nu_\tau$.
We choose $\nu_\tau$ before collecting batch $\tau$ and hold it fixed within that batch. Appendix~\ref{app:exploration} instantiates this choice under progressively missing offline data.
After displaying the advertisement-price pair $(x_{\tau,b},p_{\tau,b})$, the seller observes the purchase outcome $y_{\tau,b}$ and computes the realized revenue $r_{\tau,b}=p_{\tau,b}y_{\tau,b}$ at Line~\ref{line:jpa-feedback}. By repeating these steps for all $b\in[B]$,
	we collect a batch of advertisement-price pairs and their
	corresponding purchase outcomes. These observations provide
	the feedback for the subsequent model updates. Once the $B$ seller-buyer interactions in batch $\tau$ are complete, we update the actor if the condition at Line~\ref{line:jpa-actor-condition-2} is satisfied, \emph{i.e.}, once every $K$ batches. The update is based on REINFORCE~\citep{williams1992simple}. In its basic form, REINFORCE takes an optimization step on the loss:
\begin{equation}\label{eq:reinforce}
	\mathcal L_{\mathrm{RF}}(\psi)
	=-\frac{1}{B}\sum_{b=1}^{B}
	r_{\tau,b}
	\log\pi_{\theta_{\mathrm{ref}}+\Delta\theta(\psi)}
	(x_{\tau,b}\mid c(z)).
\end{equation}
Thus, advertisements that produce higher observed
revenues receive greater weight in the update.
Building on this, we introduce a leave-one-out baseline and a KL penalty. Specifically, we replace the observed revenue $r_{\tau,b}$
in Equation~\eqref{eq:reinforce} with a revenue advantage estimate
$A_{\tau,b}$ minus a KL regularization term and obtain the actor loss defined at Line~\ref{line:jpa-actor-loss}.
To estimate the advantage, we subtract
from $r_{\tau,b}$ a leave-one-out baseline that averages the critic's
expected revenue estimates for $j\neq b$:
\begin{equation}\label{eq:jpa-revenue-advantage}
A_{\tau,b}:=r_{\tau,b}
-\frac{1}{B-1}\sum_{j\in[B]:\,j\neq b}\widehat u_{\phi_\tau,j},
\qquad
\widehat u_{\phi_\tau,j}:=p_{\tau,j}\widehat d_{\phi_\tau}(p_{\tau,j},x_{\tau,j},z).
\end{equation}
Unlike a reward-based REINFORCE leave-one-out baseline, we average the critic's expected revenue estimates $\widehat u_{\phi_\tau,j}$ rather than the observed
	revenues $r_{\tau,j}$ for $j\neq b$. This choice aims to reduce purchase-outcome noise in the baseline. Appendix~\ref{app:baseline} compares this noise with the critic estimation error that replaces it.
In addition, the KL term in the actor loss at Line~\ref{line:jpa-actor-loss}
	plays the role of preventing large deviations from the reference model. As shown in Appendix~\ref{app:unbiased-actor-gradient},
conditional on the history before batch $\tau$, the negative
gradient of this loss evaluated at $\psi_\tau$ is an unbiased
estimator of the gradient with respect to the LoRA parameters
of the KL-regularized objective $u_\beta$ defined in Equation~\eqref{eq:regularized-objective},
averaged over the $\varepsilon_\tau$-greedy pricing strategy
used in that batch. The pricing strategy is held fixed when
taking this gradient. Finally, we store the observations
	collected during the batch in $\mathcal B$ at
	Line~\ref{line:jpa-buffer} and sample $S$ observations to update
	the critic. 
    
    Then, at Line~\ref{line:jpa-critic-step}, we take one gradient step
	on the critic loss, which combines binary cross-entropy on the
	observed purchase outcomes $y_s$ with a penalty for deviations
	from $\phi_1$. In this way, the critic is updated after
	every batch, while the actor is updated every $K$ batches using
	only the seller-buyer interactions in the current batch.

\begin{figure}[H]
	\centering
	\includegraphics[width=0.8\linewidth]{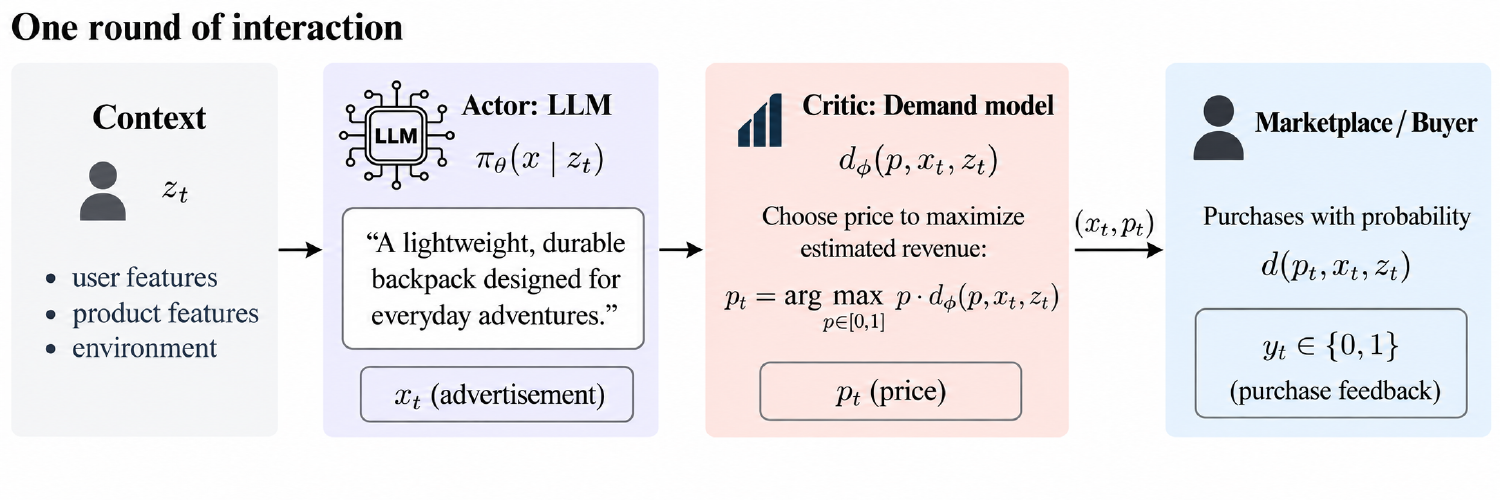}
	\caption{One round of seller-buyer interaction. The actor generates an advertisement, the seller uses the critic to select a price, and the buyer provides purchase feedback. }
	\label{fig:one-round-interaction}
\end{figure}

\providecommand{\experimentroot}{experimental_section}
\section{Experimental Results}
\label{sec:experiments}
\label{exp:body-start}
We evaluate Algorithm~\ref{alg:actor-critic} under three synthetic demand models and a simulator fitted to real-world marketplace data. In the following, we denote by \texttt{Trained} the algorithm that uses the actor and the critic obtained from Algorithm~\ref{alg:actor-critic}, while we denote by \texttt{Reference} the algorithm that selects prices according to the critic model returned by Algorithm~\ref{alg:actor-critic} and generates advertisements according to the reference model, \emph{i.e.}, the model that is not adapted during the seller-buyer interactions. This comparison isolates the effect of adapting the advertisement-generation policy while using the same pricing rule for both approaches. We report both absolute and relative changes in expected revenue, defined as follows:
\[
\Delta u := u_{\texttt{Train}}-u_{\texttt{Ref}},
\qquad
\Delta u(\%) := 100\,\frac{\Delta u}{u_{\texttt{Ref}}},
\]
where $u_{\texttt{Train}}$ and $u_{\texttt{Ref}}$ denote estimates of the expected revenues of the two policies.\\
We also evaluate revenue achieved throughout the learning process. To do so, we compare \texttt{Trained} with \texttt{UCB}. The latter algorithm generates advertisements using the fixed pretrained reference model and selects prices using UCB1~\citep{auer2002finite}, independently of the generated advertisement. Both approaches use the same price grid and interact with buyers over the same number of rounds. At each round $t$, \texttt{UCB} updates its pricing strategy based solely on the observed reward $r_t=p_ty_t$, without relying on a pretrained critic. We present this comparison through training-revenue curves, which show how expected revenue evolves throughout the interactions.
In both the synthetic and real-world experiments, we adapt \texttt{Qwen2.5-1.5B-Instruct} with rank-eight LoRA over $N=500$ batches of $B=32$ offers per run ($T=16{,}000$ purchase outcomes). Appendix~\ref{app:experiments} gives further implementation details, as well as additional ablations.

\subsection{Synthetic Demand Models}
\label{sec:exp-synthetic}
We first examine our performances under \emph{logit} and \emph{probit} demand, two models used to study consumer responses to credit offers~\citep{farrell2021heterogeneity,bertrand2010advertising}. For a fixed context $z$ and a content feature $h(x)$, \textit{e.g.}, an indicator of whether the advertisement mentions a word of choice, these models are given by:
\begin{equation}
d_{\mathrm{logit}}(p,x)=\sigma(a-\kappa p+w^\top h(x)),
\qquad
d_{\mathrm{probit}}(p,x)=\Phi(a-\kappa p+w^\top h(x)).
\label{eq:exp-choice-models}
\end{equation}
Here, $\kappa>0$ ensures that demand decreases with price, while $\sigma$ and $\Phi$ denote the logistic function and standard normal CDF, respectively. For the steel-bottle experiments (see the Appendix for additional details), $h(x)$ equals one if advertisement $x$ contains the word \textit{durable} and zero otherwise. We set illustrative logit coefficients $(a,\kappa,w)=(1,4,0.8)$ and calibrate probit to match the resulting logit purchase probabilities at three fixed price-feature combinations. 

The third model, \emph{price disclosure}, uses $h(x)$ to indicate whether the advertisement reveals the offered price, allowing disclosure to affect both the probability of purchase at a given price and how demand responds to price changes. We construct separate demand curves from the purchase rates reported by \citet{dube2026effects} at \$0.99, \$2.99 and \$5.99 for advertisements that omit the price and those that disclose it.
For each condition, we linearly interpolate between the three observed rates to define purchase probabilities at intermediate prices. This yields two demand curves, with $h(x)$ selecting which curve applies to a generated advertisement. We restrict prices to the observed range and divide them by $5.99$, obtaining the normalized interval $p\in[0.99/5.99,1]$ without changing the associated purchase probabilities.

Across all three models, we pretrain the critic on 8,192 offline offers with \texttt{Reference}-generated advertisements and Bernoulli purchase outcomes sampled from the corresponding demand model. The critic observes the price and the specified feature $h(x)$ without access to the demand parameters; Appendix~\ref{app:synthetic} provides calibration and sampling details.
\begin{table}[h]
\centering\small
\caption{Synthetic expected revenue from 10,000 fresh advertisements per policy and environment. The 95\% intervals use 2,000 bootstrap resamples, drawn independently for the compared policies.}
\label{tab:exp-synthetic}
\vspace{7pt}
\begin{tabular}{@{}lrrr@{}}
\toprule
Policy / metric & Logit & Probit & Price disclosure \\
\midrule
\emph{Trained} & 0.217117 & 0.205616 & 0.027162 \\
\emph{Reference} & 0.205418 & 0.195496 & 0.017416 \\
\midrule
$\boldsymbol{\Delta u}$ & \textbf{+0.011698} & \textbf{+0.010120} & \textbf{+0.009746} \\
$\boldsymbol{\Delta u(\%)}$ & \textbf{+5.69\%} & \textbf{+5.18\%} & \textbf{+55.96\%} \\
95\% interval for $\Delta u$ & [0.0111, 0.0123] & [0.0095, 0.0107] & [0.0096, 0.0099] \\
\bottomrule
\end{tabular}
\end{table}

Table~\ref{tab:exp-synthetic} reports expected revenue gains of \textbf{5.69\%}, \textbf{5.18\%}, and \textbf{55.96\%} over \texttt{Reference} under \emph{logit}, \emph{probit}, and \emph{price disclosure}, respectively. These gains accompany an increase in the frequency of the rewarded content features: the fraction of advertisements containing \textit{durable} rises from $82.14\%$ to $97.11\%$ in logit and $96.73\%$ in probit, while the fraction disclosing the price rises from $35.17\%$ to $98.23\%$. Under the shared critic, the increased disclosure frequency fully accounts for the revenue gain in the \textit{price disclosure} model (see Appendix~\ref{app:synthetic} for further details). The upper panels of Figure~\ref{fig:exp-synthetic} complement these averages with held-out scatter plots, with all block means lying above the equality line across the three demand models. The lower panels display the corresponding training-revenue curves, \textit{i.e.} how expected revenue evolves during the optimization process. As described in the evaluation protocol, we compare \texttt{Trained} and \texttt{UCB} in this online regime. Since the latter continues to balance price exploration and exploitation as feedback accumulates, it makes sense to assess its performance on the fly, rather than evaluating a final price recommendation offline.

\begin{figure}[!htb]
\centering
\includegraphics[width=\linewidth]{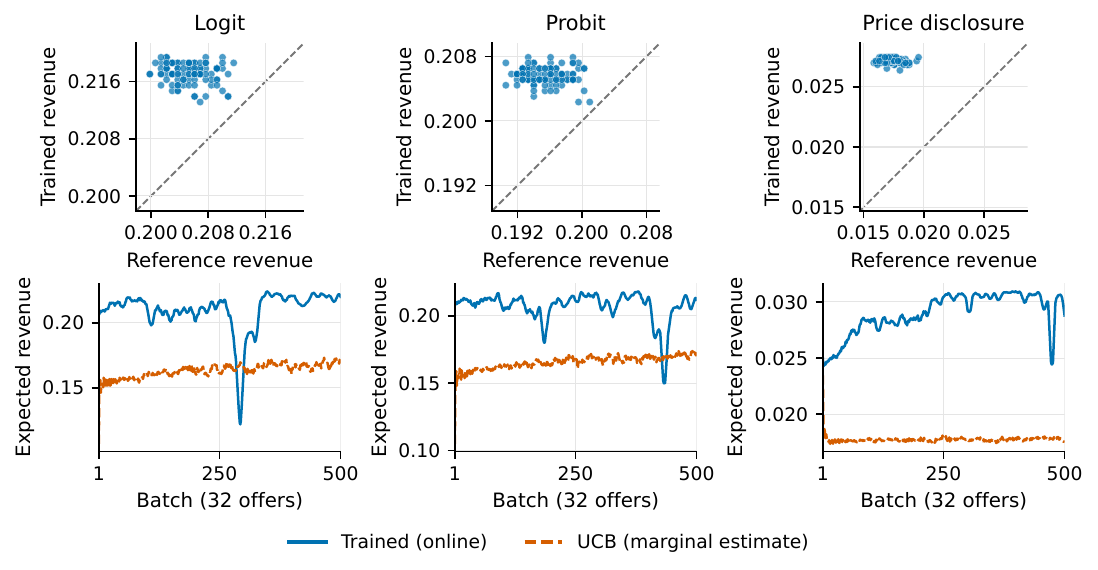}
\caption{Synthetic evaluation. Top: held-out mean revenues (\texttt{Reference}, horizontal; \texttt{Trained}, vertical), with a dashed equality line. Each point averages 100 independently generated advertisements per policy; independent blocks are aligned for display only. Bottom: trailing ten-batch revenue means for \texttt{Trained} (blue) and \texttt{UCB} (orange). \texttt{Trained} revenue uses recorded offer probabilities; \texttt{UCB} revenue is estimated from logged prices and feature frequencies (Appendix~\ref{app:ucb-reconstruction}).}
\label{fig:exp-synthetic}
\end{figure}
\subsection{Experiments Based on Real Data}
\label{sec:exp-real}
The first evaluation protocol assesses the effectiveness of LoRA through a held-out comparison of \texttt{Trained} and \texttt{Reference}. We evaluate expected revenue using a frozen demand simulator fitted to Avito marketplace listings~\citep{avito2018demand}, with text and price as inputs and deal probabilities as targets. Furthermore, we pretrain the \textit{initial} critic on a disjoint split of the same size. The listings cover five product categories: \textit{Audio/Video}, \textit{Home Appliances}, \textit{Laptops}, \textit{Mobile Phones} and \textit{Tablets/E-readers}. Within each category, we run Algorithm~\ref{alg:actor-critic} on a single product to obtain the \texttt{Trained} policy, which we then compare with \texttt{Reference} on 20 other products. Both use the same \textit{final} critic for price selection. We thus generate 100 advertisements and prices per policy and test product, yielding 10,000 evaluations each. During both training and evaluation, five fixed marketplace examples guide the format and phrasing of generated advertisements to limit shifts from the text distribution encoded by the demand. Finally, the context $z$ contains product facts such as name, category and condition. 

We normalize prices separately for each product to use a common action space $[0,1]$ across different monetary price scales. For a product described by context $z$, action $p\in[0,1]$ corresponds to the monetary price $\lambda(z)p$, where $\lambda(z)$ is the product's maximum retained listing price, determined without using demand targets. Both the marketplace simulator and the critic receive this monetary price, matching the kind used while fitting. During evaluation, we divide expected monetary revenue $\lambda(z)p\,d(p,x,z)$ by $\lambda(z)$ and report $p\,d(p,x,z)$. This measures revenue relative to each product's price scale, avoiding greater weight for products simply because their monetary prices are higher. Notably, since $\lambda(z)$ is fixed and positive for each product, this normalization preserves the price that maximizes unregularized revenue within the admissible range. Table~\ref{tab:exp-real} reports an increase in average normalized expected revenue from \textbf{0.1991} for \texttt{Reference} to \textbf{0.2107} for \texttt{Trained}, corresponding to a gain of \textbf{5.81\%}. All five categories improve on average, with gains ranging from \textbf{1.98\%} for \textit{Mobile Phones} to \textbf{14.26\%} for \textit{Laptops}.

The second evaluation protocol examines expected revenue during online learning by comparing \texttt{Trained} with \texttt{UCB} on the five training products. Both methods use the same price grid and online interaction budget, while \texttt{UCB} pairs the reference generator with price exploration guided by observed rewards, without a pretrained critic. Figure~\ref{fig:exp-real} tracks their expected revenue throughout training: \texttt{Trained} achieves higher mean expected revenue over the final 100 batches in every category. This comparison illustrates the benefits of joint price and advertisement optimization supported by offline critic supervision. Appendix~\ref{app:exploration} further examines performance and the role of price exploration as offline data are progressively reduced, while Appendix~\ref{app:baseline} studies the estimation error of the leave-one-out baseline constructed from the critic's revenue estimates (Equation~\eqref{eq:jpa-revenue-advantage}).

\section{Conclusion and Future Work}

We study a sequential pricing problem in which a seller jointly selects a price
and an LLM-generated advertisement, observing only whether the buyer purchases.
We propose an online actor-critic algorithm that adapts the advertisement policy
through LoRA, while a demand model guides price selection and provides a
leave-one-out baseline for the KL-regularized actor updates. Finally, we
validate our algorithm on both synthetic and real-world marketplace data.
A natural extension is to allow demand to change over time. This setting is
challenging because shifting buyer preferences may alter both price sensitivity
and the effect of advertisement content, so the critic must track a moving
target from binary feedback while the actor keeps adapting to its estimates.

\begin{table}[H]
\centering\small
\caption{Marketplace expected revenue from 100 fresh advertisements per policy and product. The 95\% intervals use 5,000 bootstrap resamples, drawn over products and paired across the policies.}
\label{tab:exp-real}
\vspace{7pt}
\begin{tabular}{@{}lrrrrrr@{}}
\toprule
Policy / metric
& \shortstack{Audio /\\Video}
& \shortstack{Home\\Appliances}
& Laptops
& \shortstack{Mobile\\Phones}
& \shortstack{Tablets /\\E-readers}
& Overall \\
\midrule
\emph{Trained}
& 0.2363 & 0.2446 & 0.1802 & 0.1936 & 0.1987 & 0.2107 \\
\emph{Reference}
& 0.2303 & 0.2329 & 0.1577 & 0.1898 & 0.1848 & 0.1991 \\
\midrule
$\boldsymbol{\Delta u}$
& \textbf{+0.0060}
& \textbf{+0.0117}
& \textbf{+0.0225}
& \textbf{+0.0038}
& \textbf{+0.0139}
& \textbf{+0.0116} \\
$\boldsymbol{\Delta u(\%)}$
& \textbf{+2.59\%}
& \textbf{+5.01\%}
& \textbf{+14.26\%}
& \textbf{+1.98\%}
& \textbf{+7.55\%}
& \textbf{+5.81\%} \\
\shortstack[l]{95\% interval\\for $\Delta u$}
& \shortstack[r]{$[0.0036,$\\$0.0083]$}
& \shortstack[r]{$[0.0091,$\\$0.0143]$}
& \shortstack[r]{$[0.0172,$\\$0.0285]$}
& \shortstack[r]{$[0.0019,$\\$0.0057]$}
& \shortstack[r]{$[0.0115,$\\$0.0163]$}
& \shortstack[r]{$[0.0097,$\\$0.0136]$} \\
\bottomrule
\end{tabular}
\end{table}

\begin{figure}[!htp]
\centering
\includegraphics[width=\linewidth]{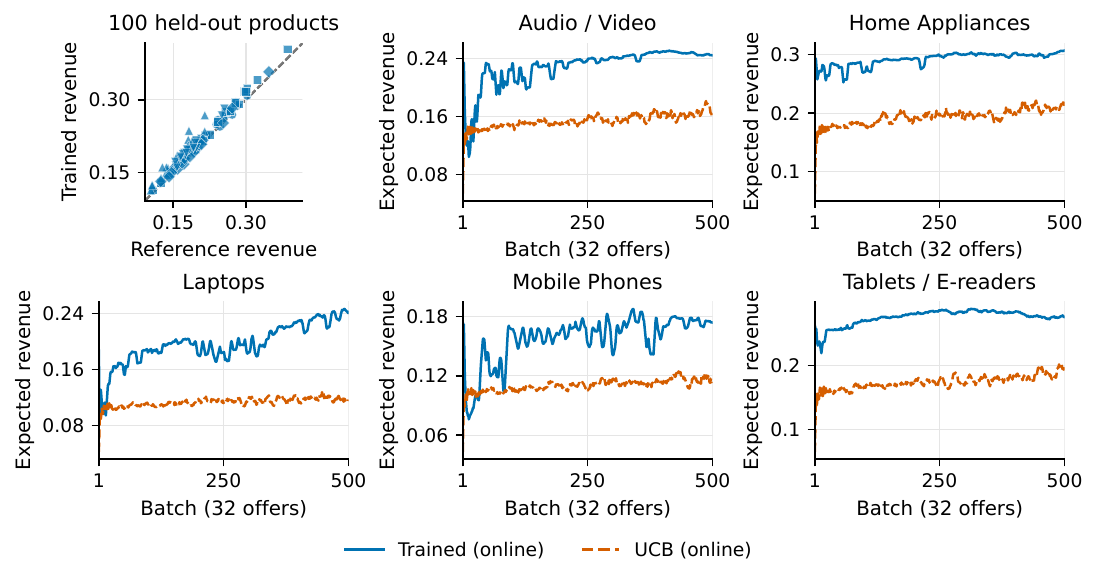}
\caption{Marketplace evaluation. Upper left: each point shows mean expected revenue for one test product (\texttt{Reference}, horizontal; \texttt{Trained}, vertical), with a dashed diagonal line indicating equal revenue. Markers follow the category order in Table~\ref{tab:exp-real}: circle, square, upward triangle, diamond, downward triangle. Other panels: training-revenue curves, which use the same colors as in Figure~\ref{fig:exp-synthetic}.}
\label{fig:exp-real}
\end{figure}
\FloatBarrier
\phantomsection
\label{exp:body-end}

\clearpage

\section*{AI use statement}
Generative AI tools were used during the drafting of this paper to improve the readability and refine the English exposition. We also used Codex to assist with code development. The authors bear full responsibility for the scientific content and the final manuscript.

\newpage
\appendix
\tableofcontents
\section{Unbiasedness of the Actor Gradient Estimator}
\label{app:unbiased-actor-gradient}

Fix a context $z$ and a batch $\tau$ in which the actor is updated.
Write $\theta(\psi):=\theta_{\mathrm{ref}}+\Delta\theta(\psi)$.
Let $\mathcal F_{\tau-1}$ denote the history before collecting
batch $\tau$, and abbreviate conditional expectation given this
history by $\mathbb E_\tau$. Conditional on $\mathcal F_{\tau-1}$,
the parameters $\psi_\tau,\phi_\tau$, the exploration probability
$\varepsilon_\tau$, and the exploration distribution $\nu_\tau$ are fixed.

Let $q_\tau^{\mathrm g}$ denote the greedy pricing function,
defined by
\[
q_\tau^{\mathrm g}(x,z)
\in \arg\max_{p\in[0,1]}
p\,\widehat d_{\phi_\tau}(p,x,z),
\]
using a fixed tie-breaking rule. For each $p\in[0,1]$,
let $q_p$ denote the constant pricing function $q_p(x,z)=p$.
At each interaction $b$, the seller chooses
\[
q_{\tau,b}
=
\begin{cases}
q_\tau^{\mathrm g},
& \text{with probability }1-\varepsilon_\tau,\\
q_{P_{\tau,b}},\quad P_{\tau,b}\sim\nu_\tau,
& \text{with probability }\varepsilon_\tau,
\end{cases}
\]
and sets $p_{\tau,b}=q_{\tau,b}(x_{\tau,b},z)$.
Conditional on $\mathcal F_{\tau-1}$, advertisements are sampled
independently from $\pi_{\theta_\tau}(\cdot\mid c(z))$.
The exploration decisions and the prices sampled from $\nu_\tau$
use fresh randomness, independent across interactions and
independent of advertisement generation. Purchase feedback satisfies
\[
\mathbb E_\tau[y_{\tau,b}\mid x_{\tau,b},p_{\tau,b}]
=
d(p_{\tau,b},x_{\tau,b},z).
\]

For a fixed advertisement $x$, the expected revenue under this
randomized choice of pricing function is
\begin{equation}
\label{eq:batch-expected-revenue}
\begin{aligned}
R_\tau(x,z)
&:=
\mathbb E_\tau\!\left[
q_{\tau,b}(x,z)d(q_{\tau,b}(x,z),x,z)
\right]\\
&=
(1-\varepsilon_\tau)
q_\tau^{\mathrm g}(x,z)
d(q_\tau^{\mathrm g}(x,z),x,z)+
\varepsilon_\tau
\int_{[0,1]}p\,d(p,x,z)\,\nu_\tau(dp).
\end{aligned}
\end{equation}
Accordingly, the KL-regularized objective in Equation~\eqref{eq:regularized-objective},
averaged over the batch's pricing randomization, is
\begin{equation}
\label{eq:batch-regularized-objective}
\begin{aligned}
J_\tau(\psi\mid z)
&:=
\mathbb E_\tau\!\left[
u_\beta(q_{\tau,b},\theta(\psi)\mid z)
\right]\\
&=
\mathbb E_{x\sim\pi_{\theta(\psi)}(\cdot\mid c(z))}
[R_\tau(x,z)] -
\beta D_{\mathrm{KL}}\!\left(
\pi_{\theta(\psi)}(\cdot\mid c(z))
\,\middle\|\,
\pi_{\theta_{\mathrm{ref}}}(\cdot\mid c(z))
\right).
\end{aligned}
\end{equation}
The pricing strategy, namely $q_\tau^{\mathrm g}$,
$\varepsilon_\tau$, and $\nu_\tau$, is held fixed when differentiating
with respect to $\psi$.

For the generation procedure of this paper, the policy $\pi_{\theta(\psi)}(\cdot\mid c(z))$ is differentiable in $\psi$ and has parameter-independent support equal to that of the reference policy. Indeed, the low-rank adaptation map $\Delta\theta$ is polynomial in $\psi$, and the actor’s logits and softmax probabilities are differentiable in their respective inputs. Thus, the probability of each fixed token sequence is differentiable in $\psi$. Moreover, generation uses the full vocabulary at the fixed temperature $0.7$, so every token has strictly positive probability for finite logits. Since the actor and the reference policy use the same vocabulary and stopping rules, both policies assign positive probability to exactly the same admissible sequences. Finally, the vocabulary is finite and generation stops after at most $64$ tokens. Hence, for each fixed context $z$, the expectations over generated sequences and the KL divergence reduce to finite sums of differentiable terms, with finite log ratios. Therefore, differentiation and expectation can be interchanged in the calculations below.

\begin{proposition}
For $B\geq 2$, the actor surrogate loss in Algorithm~\ref{alg:actor-critic} satisfies
\begin{equation}
\label{eq:actor-gradient-unbiasedness}
\mathbb E_\tau\!\left[
-\left.\nabla_\psi\mathcal L_{\mathrm{actor}}(\psi)
\right|_{\psi=\psi_\tau}
\right]
=
\left.\nabla_\psi J_\tau(\psi\mid z)
\right|_{\psi=\psi_\tau}.
\end{equation}
Here, the advantage estimates and log-ratio coefficients in the
surrogate loss are held fixed during differentiation.
\end{proposition}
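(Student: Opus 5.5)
We first compute the left-hand side explicitly. Since the advantages $A_{\tau,b}$ and the log-ratio coefficients $\ell_{\tau,b}:=\log\bigl(\pi_{\theta_\tau}(x_{\tau,b}\mid c(z))/\pi_{\theta_{\mathrm{ref}}}(x_{\tau,b}\mid c(z))\bigr)$ are held fixed, we obtain
\[
-\nabla_\psi\mathcal L_{\mathrm{actor}}(\psi)\big|_{\psi_\tau}
=\frac1B\sum_{b=1}^B\bigl(A_{\tau,b}-\beta\ell_{\tau,b}\bigr)g_b,
\qquad g_b:=\nabla_\psi\log\pi_{\theta(\psi)}(x_{\tau,b}\mid c(z))\big|_{\psi_\tau}.
\]
Because the $B$ interactions are identically distributed given $\mathcal F_{\tau-1}$, it suffices to show that $\mathbb E_\tau[(A_{\tau,b}-\beta\ell_{\tau,b})g_b]=\nabla_\psi J_\tau(\psi_\tau\mid z)$ for a single fixed $b$. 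We split $A_{\tau,b}$ into the reward term $r_{\tau,b}$ and the baseline term $\bar u_{-b}:=\frac1{B-1}\sum_{j\neq b}\widehat u_{\phi_\tau,j}$. The requirement $B\ge 2$ ensures that $\bar u_{-b}$ is well defined.

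\textbf{Reward term.} We condition on $x_{\tau,b}$ and use the tower property. First, $\mathbb E_\tau[y_{\tau,b}\mid x_{\tau,b},p_{\tau,b}]=d(p_{\tau,b},x_{\tau,b},z)$. Second, the exploration coin and $P_{\tau,b}$ are independent of $x_{\tau,b}$. Together these give $\mathbb E_\tau[r_{\tau,b}\mid x_{\tau,b}]=R_\tau(x_{\tau,b},z)$ by Equation~\eqref{eq:batch-expected-revenue}. Hence
\[
\mathbb E_\tau[r_{\tau,b}g_b]=\sum_x\pi_{\theta_\tau}(x\mid c(z))R_\tau(x,z)\nabla\log\pi(x)=\nabla_\psi\sum_x\pi_{\theta(\psi)}(x\mid c(z))R_\tau(x,z)\big|_{\psi_\tau}.
\]
This uses the log-derivative identity $\pi\nabla\log\pi=\nabla\pi$ on the common finite support, and the fact that $R_\tau$ does not depend on $\psi$ because the pricing strategy is held fixed.

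\textbf{Baseline term.} Each $\widehat u_{\phi_\tau,j}$ with $j\neq b$ is a function of $(x_{\tau,j},p_{\tau,j})$ only, since $\phi_\tau$ is fixed given $\mathcal F_{\tau-1}$. By the assumed independence across interactions, $\bar u_{-b}$ is therefore independent of $x_{\tau,b}$. Consequently $\mathbb E_\tau[\bar u_{-b}g_b]=\mathbb E_\tau[\bar u_{-b}]\,\mathbb E_\tau[g_b]$. Moreover $\mathbb E_\tau[g_b]=\sum_x\nabla\pi_{\theta(\psi)}(x)=\nabla\sum_x\pi_{\theta(\psi)}(x)=\nabla 1=0$, with the interchange justified by the finiteness of the sum. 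The baseline term thus contributes nothing. This is the step that needs the most care: one must confirm that the critic-based baseline for index $b$ involves no quantity depending on $x_{\tau,b}$ or on the exploration randomness of interaction $b$.

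\textbf{KL term.} We need $-\beta\,\mathbb E_\tau[\ell_{\tau,b}g_b]=-\beta\nabla_\psi D_{\mathrm{KL}}(\pi_{\theta(\psi)}\|\pi_{\theta_{\mathrm{ref}}})|_{\psi_\tau}$. Differentiating $\sum_x\pi_\psi(x)\log(\pi_\psi(x)/\pi_{\mathrm{ref}}(x))$ gives two pieces. The first is $\sum_x\nabla\pi_\psi(x)\log(\pi_\psi/\pi_{\mathrm{ref}})$, which equals $\sum_x\pi_{\theta_\tau}(x)\,\ell(x)\nabla\log\pi(x)$ at $\psi_\tau$. The second is $\sum_x\pi_\psi\nabla\log\pi_\psi=\sum_x\nabla\pi_\psi=0$, so it vanishes. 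Here the positivity of both policies on the same finite support, noted above, guarantees that the log-ratios are finite and that the interchange is valid. Adding the three contributions and comparing with Equation~\eqref{eq:batch-regularized-objective} yields \eqref{eq:actor-gradient-unbiasedness}.
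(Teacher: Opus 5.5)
Your proof is correct and follows essentially the same route as the paper's: you write the negative surrogate gradient as a score-weighted sum and eliminate the critic-based leave-one-out baseline using its conditional independence from $x_{\tau,b}$ together with the zero-mean score identity. You then convert the reward term into $\nabla_\psi$ of the expected $R_\tau$ via $\mathbb E_\tau[r_{\tau,b}\mid x_{\tau,b}]=R_\tau(x_{\tau,b},z)$ and the likelihood-ratio identity, and handle the KL term by the product rule, where the extra $\sum_x\nabla\pi$ piece vanishes, exactly as the paper does (your up-front reduction to a single fixed $b$ is only a cosmetic difference).
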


\begin{proof}
Define the score and log-ratio functions
\[
s_\tau(x)
:=
\left.\nabla_\psi
\log\pi_{\theta(\psi)}(x\mid c(z))
\right|_{\psi=\psi_\tau},
\qquad
\ell_\tau(x)
:=
\log\frac{\pi_{\theta_\tau}(x\mid c(z))}
{\pi_{\theta_{\mathrm{ref}}}(x\mid c(z))}.
\]
The negative gradient of the surrogate loss is
\begin{equation}
\label{eq:actor-gradient-estimator}
\widehat g_\tau
=
\frac{1}{B}\sum_{b=1}^B
\left(A_{\tau,b}-\beta\ell_\tau(x_{\tau,b})\right)
s_\tau(x_{\tau,b}).
\end{equation}
Since policy probabilities sum to one, the score identity gives
\[
\mathbb E_\tau[s_\tau(x_{\tau,b})]
=
\left.\nabla_\psi
\sum_{x\in\mathcal X}\pi_{\theta(\psi)}(x\mid c(z))
\right|_{\psi=\psi_\tau}
=0.
\]

For $j\neq b$, the critic revenue estimate
\[
\widehat u_{\phi_\tau,j}
=
p_{\tau,j}\widehat d_{\phi_\tau}
(p_{\tau,j},x_{\tau,j},z)
\]
is conditionally independent of $x_{\tau,b}$, because the critic
is fixed before collecting the batch and advertisement draws
and pricing randomizations are independent across interactions.
Therefore,
\[
\mathbb E_\tau[
\widehat u_{\phi_\tau,j}s_\tau(x_{\tau,b})]
=
\mathbb E_\tau[\widehat u_{\phi_\tau,j}]
\mathbb E_\tau[s_\tau(x_{\tau,b})]
=0.
\]
Using
\[
A_{\tau,b}
=
r_{\tau,b}
-
\frac{1}{B-1}\sum_{j\neq b}\widehat u_{\phi_\tau,j},
\]
we conclude that the leave-one-out baseline contributes zero
to the conditional expected gradient. Formally,
\[
\begin{aligned}
\mathbb E_\tau[A_{\tau,b}s_\tau(x_{\tau,b})]
&=
\mathbb E_\tau[r_{\tau,b}s_\tau(x_{\tau,b})]
-
\frac{1}{B-1}\sum_{j\neq b}
\underbrace{
\mathbb E_\tau[
\widehat u_{\phi_\tau,j}s_\tau(x_{\tau,b})]
}_{=\,0}\\
&=
\mathbb E_\tau[r_{\tau,b}s_\tau(x_{\tau,b})].
\end{aligned}
\]
Moreover, the demand model and the pricing strategy imply
\(
\mathbb E_\tau[r_{\tau,b}\mid x_{\tau,b}]
=
R_\tau(x_{\tau,b},z).
\)
Consequently, the likelihood-ratio identity yields
\begin{equation}
\label{eq:revenue-gradient-identity}
\begin{aligned}
\mathbb E_\tau[A_{\tau,b}s_\tau(x_{\tau,b})]
&=
\mathbb E_\tau[r_{\tau,b}s_\tau(x_{\tau,b})]\\
&=
\mathbb E_\tau[
R_\tau(x_{\tau,b},z)s_\tau(x_{\tau,b})]\\
&=
\left.\nabla_\psi
\mathbb E_{x\sim\pi_{\theta(\psi)}(\cdot\mid c(z))}
[R_\tau(x,z)]
\right|_{\psi=\psi_\tau}.
\end{aligned}
\end{equation}

Finally, applying the product rule and using that the reference
policy is fixed gives
\begin{equation}
\label{eq:kl-gradient-identity}
\begin{aligned}
&\left.\nabla_\psi D_{\mathrm{KL}}\!\left(
\pi_{\theta(\psi)}(\cdot\mid c(z))
\,\middle\|\,
\pi_{\theta_{\mathrm{ref}}}(\cdot\mid c(z))
\right)\right|_{\psi=\psi_\tau}\\
&\quad=
\sum_x(\ell_\tau(x)+1)
\left.\nabla_\psi\pi_{\theta(\psi)}(x\mid c(z))
\right|_{\psi=\psi_\tau}\\
&\quad=
\mathbb E_\tau[
\ell_\tau(x_{\tau,b})s_\tau(x_{\tau,b})]
+
\underbrace{\mathbb E_\tau[s_\tau(x_{\tau,b})]}_{=\,0}\\
&\quad=
\mathbb E_\tau[
\ell_\tau(x_{\tau,b})s_\tau(x_{\tau,b})].
\end{aligned}
\end{equation}
The sum is over the policy support, and the last equality
uses the zero-mean score identity.
Taking conditional expectations in
Equation~\eqref{eq:actor-gradient-estimator} and applying
Equations~\eqref{eq:revenue-gradient-identity}
and~\eqref{eq:kl-gradient-identity} proves
Equation~\eqref{eq:actor-gradient-unbiasedness}, concluding the proof.
\end{proof}
The result requires neither an accurate nor an unbiased critic. Indeed,
the leave-one-out baseline has zero expected contribution for
any critic fixed before collecting the batch. Critic quality can
nevertheless affect the selected prices and the estimator's
variance. Unbiasedness holds at the sampling parameters
$\psi_\tau$, with the batch's pricing strategy held fixed.
\providecommand{\experimentroot}{experimental_section}
\raggedbottom
\section{Experimental Protocols and Additional Results}
\label{app:experiments}
We provide implementation and evaluation details for Section~\ref{sec:experiments}, followed by additional test summaries and the offline-data and baseline ablations. We retain the notation of Algorithm~\ref{alg:actor-critic}: $T$ counts individual offers, $N=T/B$ counts batches, $p$ is the normalized price, and $q$ denotes a pricing rule.

\subsection{Shared Implementation and Evaluation}
\label{app:shared}
We adapt \texttt{Qwen2.5-1.5B-Instruct} through rank-eight LoRA on the attention query and value projections, keeping the pretrained weights fixed. Disabling the adapters recovers \texttt{Reference}. Generation and sequence scoring use the same full-vocabulary distribution at temperature $0.7$, with log probabilities summed over generated tokens. In the actor surrogate, the revenue advantage and sampled KL log ratio are held fixed when differentiating the sequence log probability. Each actor update uses the current batch of $B=32$ offers and occurs every $K=5$ batches; each critic update uses $S=64$ offers sampled uniformly from the accumulated replay buffer, with replacement only while fewer than 64 offers are available. Both updates use Adam, giving 100 actor updates and 500 critic updates over $N=500$ batches. Table~\ref{tab:exp-settings} summarizes the shared settings. The main experiments use $\varepsilon_\tau=0$; Appendix~\ref{app:exploration} varies this schedule as offline data are removed.

\begin{table}[!htb]
\centering\small
\caption{Shared implementation settings across synthetic and marketplace experiments. Appendix~\ref{app:exploration} specifies changes to initial critic support, exploration, and anchoring in the offline-data ablation.}
\label{tab:exp-settings}
\vspace{7pt}
\begin{tabular}{@{}ll@{}}
\toprule
Quantity & Value \\
\midrule
Actor & \texttt{Qwen2.5-1.5B-Instruct} \\
Adapted matrices & Query and value projections; $L=56$ across 28 layers \\
LoRA rank / scaling coefficient & $r_\ell=8$ / $\alpha_\ell=16$ for every $\ell$ \\
Effective weight update & $(\alpha_\ell/r_\ell)B_\ell A_\ell=2B_\ell A_\ell$ \\
Query adapter dimensions & $A_\ell\in\mathbb R^{8\times1536}$, $B_\ell\in\mathbb R^{1536\times8}$ \\
Value adapter dimensions & $A_\ell\in\mathbb R^{8\times1536}$, $B_\ell\in\mathbb R^{256\times8}$ \\
Adapter initialization & $A_\ell$: Kaiming uniform; $B_\ell=0$ \\
LoRA dropout / reference weights & $0$ / $\theta_{\mathrm{ref}}$ frozen \\
Generation temperature / maximum new tokens & $0.7$ / $64$ \\
Maximum prompt length & $512$ tokens \\
Online horizon & $N=500$, $B=32$, $T=16{,}000$ \\
Critic replay minibatch / actor update period & $S=64$ / $K=5$ batches \\
Actor / online critic learning rate & $10^{-4}$ / $3\times10^{-3}$ \\
Actor KL / critic anchoring coefficient & $\beta=10^{-3}$ / $\rho=10^{-4}$ \\
Price grid & $101$ equally spaced admissible prices \\
Demand networks & Two hidden layers of 128 ReLU units, sigmoid output \\
Offline optimizer / learning rate / minibatch & Adam / $10^{-3}$ / $64$ \\
Offline stopping rule & 5 epochs without improvement in validation BCE \\
Checkpoint schedule & Every 100 batches, final evaluation after batch 500 \\
\bottomrule
\end{tabular}
\end{table}

For each context $z$, we evaluate unregularized expected revenue by averaging price times purchase probability over $m$ independently generated advertisements:
\begin{equation}
u(q,\theta\mid z)\approx\frac1m\sum_{i=1}^{m}q(x_i,z)d(q(x_i,z),x_i,z),
\qquad x_i\sim\pi_\theta(\cdot\mid c(z)).
\label{eq:exp-revenue-estimator}
\end{equation}
Here, $d$ is the exact synthetic demand or the frozen marketplace simulator. This evaluation integrates out Bernoulli purchase noise while retaining variation in generated advertisements, and excludes the KL penalty used during optimization. Both \texttt{Trained} and \texttt{Reference} use the same final critic to construct $q$, so their comparison isolates text adaptation under shared learned pricing. We report $\Delta u$ and $\Delta u(\%)$ as defined in Section~\ref{sec:experiments}, together with mean prices and purchase probabilities where relevant. Expected revenue averages the products $p\,d(p,x,z)$, rather than multiplying these two marginal means.

Synthetic tests use 10,000 advertisements per policy and environment, with 95\% percentile intervals for $\Delta u$ obtained from 2,000 independent bootstrap resamples of the two policies. Marketplace tests use 100 advertisements per policy and product, averaged first within products and then equally across products. Their 95\% intervals use 5,000 bootstrap resamples of paired product-level revenue differences, over 20 products per category or 100 products overall. These intervals condition on the trained policies and fixed demand model; they exclude variation across training seeds and simulator fits. The synthetic scatters group each policy's test advertisements into 100 nonoverlapping blocks of 100, aligned by sample index for display. This alignment does not define paired observations or independent training repetitions.

We implement \texttt{UCB} with one UCB1 arm per admissible grid price~\citep{auer2002finite}. Let $N_j$ and $\overline r_j$ denote the number of observations and mean observed reward for price $p_j$, and let $n=\sum_jN_j$ count completed offers. After selecting each of the 101 arms once in ascending price order, the next arm is
\begin{equation}
j_{n+1}\in\arg\max_j\left\{\overline r_j+\sqrt{\frac{2\log n}{N_j}}\right\}.
\label{eq:exp-ucb}
\end{equation}
Ties select the lowest arm index. The reference generator produces advertisements in batches, while arm statistics update after every offer using only the observed reward $r=py$. Price selection ignores the current advertisement, so each arm has a stationary reward distribution in $[0,1]$ under a fixed context and generator. Expected revenues are used only for evaluation and never enter arm selection. Both \texttt{Trained} and \texttt{UCB} receive 16,000 online outcomes.

\subsection{Synthetic Models, Calibration, and Reconstruction}
\label{app:synthetic}
The \emph{logit} and \emph{probit} models in Equation~\ref{eq:exp-choice-models} use a fixed product context and a binary feature $h(x)$, equal to one when advertisement $x$ contains the standalone word \textit{durable}, irrespective of case. Both steel-bottle contexts explicitly permit this claim. The coefficient $a$ controls baseline purchase propensity, $\kappa>0$ controls price sensitivity, and $w$ weights the content feature. These response families are motivated by credit-offer applications~\citep{farrell2021heterogeneity,bertrand2010advertising}; their numerical coefficients and the text feature are illustrative choices for the present experiments. Setting $(a_L,\kappa_L,w_L)=(1,4,0.8)$ makes logit demand decrease from $\sigma(1)$ to $\sigma(-3)$ over $p\in[0,1]$ when $h=0$, while the durability feature increases the index by $0.8$. This gives a positive, nonsaturating content effect and an interior revenue-maximizing price. We calibrate probit to match the resulting logit probabilities at $(p,h)=(0,0),(1,0),(1/2,1)$:
\begin{equation}
\begin{aligned}
a_P&=\Phi^{-1}(\sigma(1)), &
\kappa_P&=a_P-\Phi^{-1}(\sigma(-3)),\\
w_P&=\Phi^{-1}(\sigma(-0.2))-a_P+\kappa_P/2.
\end{aligned}
\label{eq:exp-probit-match}
\end{equation}
The resulting coefficients are $a_P=0.6160176927$, $\kappa_P=2.2863595488$, and $w_P=0.4019201292$. Calibration is deterministic and uses neither policy outcomes nor parameter estimates from the cited credit studies. The exact probit specification can be found in the repository.

For \emph{price disclosure}, the context describes the fictional paperback novel \emph{Stateline}. The prompt permits the token \texttt{[DISCLOSE\_PRICE]}, which is replaced with the selected dollar price after pricing. The binary feature records token presence, irrespective of case; the oracle does not validate other numerical or semantic claims in the advertisement. We construct demand as
\begin{equation}
d_{\mathrm{disclosure}}(p,x)=(1-h(x))\ell_0(5.99p)+h(x)\ell_1(5.99p),
\qquad p\in[0.99/5.99,1],
\label{eq:exp-book}
\end{equation}
where $\ell_0$ and $\ell_1$ map dollar prices to purchase probabilities for Plain Ad ($h=0$) and Price Ad ($h=1$), respectively. Each curve linearly interpolates between the three observations in Table~\ref{tab:exp-book-cells}; the factor $5.99$ converts normalized actions back to dollar prices. The difference between the two curves varies with price, allowing disclosure to affect both purchase propensity and price sensitivity. Interpolation preserves probabilities in $[0,1]$ within the observed range and is a modeling assumption applied to the published experimental cells. The construction does not reproduce the structural consumer-search model of \citet{dube2026effects}. We use 101 equally spaced prices on $[0.99/5.99,1]$ for price disclosure and on $[0,1]$ for logit and probit; rewards remain $py$ in normalized units. The published cell means are treated as fixed, so the reported intervals do not propagate their sampling uncertainty.

\begin{table}[!htb]
\centering\small
\caption{Price disclosure purchase probabilities from Table~A5, Panel~A of \citet{dube2026effects}. Plain Ad and Price Ad demand are interpolated separately over the three observed dollar prices.}
\label{tab:exp-book-cells}
\vspace{7pt}
\begin{tabular}{@{}lrrr@{}}
\toprule
Condition & \$0.99 & \$2.99 & \$5.99 \\
\midrule
Plain Ad, $h=0$ & 0.106 & 0.024 & 0.011 \\
Price Ad, $h=1$ & 0.147 & 0.039 & 0.012 \\
\bottomrule
\end{tabular}
\end{table}

For critic pretraining, we sample 1,024 unique training advertisements and 256 unique validation advertisements from \texttt{Reference} under the context used online, keeping the two text sets disjoint. Each advertisement receives one uniformly sampled price within each of eight equal-width strata of the admissible interval, followed by independent Bernoulli outcomes from the corresponding demand model. This produces 8,192 training offers and 2,048 validation offers. The critic receives the normalized price and declared feature $h(x)$, without access to the demand coefficients or exact probabilities. The synthetic prompt supplies two style examples and requests a factual title followed by one sentence; no marketplace observations or fitted marketplace networks enter these runs.

\begin{table}[!htb]
\centering\small
\caption{Synthetic test means over 10,000 advertisements per policy and environment. Prices are normalized; the feature indicates \textit{durable} in logit/probit and the disclosure token in price disclosure.}
\label{tab:exp-synthetic-detail}
\vspace{7pt}
\begin{tabular}{@{}llrrrr@{}}
\toprule
Demand & Policy & Revenue & Price & Purchase prob. & Feature rate \\
\midrule
Logit & \texttt{Trained} & 0.217117 & 0.555954 & 0.390138 & 0.971100 \\
Logit & \texttt{Reference} & 0.205418 & 0.534996 & 0.381832 & 0.821400 \\
Probit & \texttt{Trained} & 0.205616 & 0.647711 & 0.317174 & 0.967300 \\
Probit & \texttt{Reference} & 0.195496 & 0.637498 & 0.305357 & 0.821400 \\
Price disclosure & \texttt{Trained} & 0.027162 & 0.417170 & 0.065257 & 0.982300 \\
Price disclosure & \texttt{Reference} & 0.017416 & 0.469808 & 0.038771 & 0.351700 \\
\bottomrule
\end{tabular}
\end{table}

Table~\ref{tab:exp-synthetic-detail} shows how the revenue gains in Table~\ref{tab:exp-synthetic} combine price and content changes. Under logit and probit, both the mean selected price and purchase probability increase; under price disclosure, mean price decreases while purchase probability and disclosure frequency increase. In the latter model, the shared critic selects the same price for both policies conditional on $h$: approximately \$2.99 when $h=0$ and \$2.49 when $h=1$, yielding normalized revenues $u_0=0.01197997$ and $u_1=0.02743573$, respectively. If $f_{\texttt{Trained}}$ and $f_{\texttt{Reference}}$ denote the respective test disclosure frequencies, then
\begin{equation}
\Delta u=(f_{\texttt{Trained}}-f_{\texttt{Reference}})(u_1-u_0)
=(0.9823-0.3517)(u_1-u_0)\simeq0.00974640.
\label{eq:exp-disclosure-gain}
\end{equation}
Thus, the \textbf{55.96\%} revenue gain is fully explained by the change in feature frequency under shared learned pricing. This identifies adaptation to the declared demand mechanism, without measuring general advertisement quality.

\phantomsection
\label{app:ucb-reconstruction}
The synthetic \texttt{UCB} logs retain selected arms, prices, and observed rewards, but omit generated advertisements and their purchase probabilities. We therefore estimate marginal training revenue using the logged prices and the offline reference feature frequency. Let $\widehat\alpha$ be the fraction of the 1,024 offline training advertisements with $h(x)=1$. For a recorded price $p_n$, the reconstructed expected revenue is
\begin{equation}
\widetilde u_n=p_n\big[(1-\widehat\alpha)d(p_n,h=0)+\widehat\alpha d(p_n,h=1)\big],
\label{eq:exp-ucb-reconstruction}
\end{equation}
where $d(p,h)$ denotes the corresponding synthetic demand at feature value $h$. The offline feature frequencies are $0.8251953125$ for logit and probit and $0.3525390625$ for price disclosure. Marginalizing over content is appropriate because \texttt{UCB} selects prices independently of the current advertisement. The offline corpus is deduplicated, so its feature frequency is an approximation to that of unrestricted online generations. We average successive groups of 32 offers and apply the same trailing ten-batch window used for \texttt{Trained}. No test revenues or test feature frequencies enter this reconstruction; the \texttt{Trained} curves use recorded offer probabilities directly.

\subsection{Marketplace Data and Evaluation Protocol}
\label{app:real-protocol}
We use Avito marketplace listings~\citep{avito2018demand} containing a title, description, posted price, and supplied deal-probability target. Preprocessing retains the five categories in Section~\ref{sec:exp-real} and removes missing required fields, nonfinite or nonpositive prices, targets outside $[0,1]$, duplicate listing identifiers, and source advertisements outside 8-60 words. Reviewed annotations, fixed before the experiments, group listings by specific commercial product identity using title, category, and product parameters, without access to price or demand targets. Ambiguous identities are omitted; conditions and minor variants may share a product group. We trim prices outside the first and ninety-ninth percentiles of the selected listings and retain at most one listing per seller and product. Complete titles and descriptions are translated into English with OPUS-MT Russian-to-English; the build stops if Cyrillic characters remain. The resulting dataset contains 20,826 listings and 393 products. Images, timestamps, locations, and seller attributes are excluded from model inputs.

Each context $z$ contains a canonical product name, category, and a label describing its physical condition, annotated from the retained listing with the smallest identifier. The condition describes this representative context rather than every listing in the product group. The current product context contains no source price, probability target or source advertisement. Five fixed marketplace examples, one per category, guide generation format and phrasing; their entire product groups are excluded before supervised and actor splits. We denote this set by $\mathcal S$, with $|\mathcal S|=5$, and include it in the prompt $c(z,\mathcal S)$ shown in Figure~\ref{fig:exp-prompt}. The prompt requires product facts to come exclusively from $z$ and requests a title followed by a one-sentence description. Parsed outputs are canonicalized as title followed by description before embedding, matching the offline advertisements. The examples provide a common generation format and vocabulary.

\begin{figure}[h]
\centering
\begingroup\setlength{\fboxsep}{8pt}\setlength{\fboxrule}{0.4pt}
\fbox{\begin{minipage}{\dimexpr\linewidth-2\fboxsep-2\fboxrule\relax}
\small\ttfamily\raggedright\setlength{\parskip}{2pt}\setlength{\parindent}{0pt}
Create one advertisement from the current context.\par
Rules:\par
- Use only information explicitly provided in the current context.\par
- Treat missing or unspecified values as unavailable. Do not infer or mention them.\par
- Use the examples only to learn the required syntax, structure, length, and writing style.\par
- Never copy product-specific facts from the examples.\par
- Do not add unsupported specifications, prices, discounts, warranties, accessories, contact details, or promotional claims.\par
- Follow the output format exactly.\par
- Return only the advertisement, without explanations or additional text.\par
Language:\par
English\par
Output format:\par
Title: \textless{}a short factual title naming the product\textgreater{}\par
Description: \textless{}one short factual sentence\textgreater{}\par
The complete advertisement must contain between 8 and 60 words.\par
Syntax demos:\par
$\langle\mathcal S\rangle$\par
Current context:\par
$\langle z\rangle$\par
Advertisement:
\end{minipage}}\endgroup
\caption{Marketplace generation prompt, with language and output-format fields expanded. The symbolic fields $\mathcal S$ and $z$ contain the five fixed marketplace examples and current product context, respectively.}
\label{fig:exp-prompt}
\end{figure}

Listings are randomly split within category after sorting identifiers, using seed 47 and proportions $35/10/35/10/10\%$ for environment training/validation, critic training/validation, and the common held-out set. These sets contain 7,291, 2,081, 7,291, 2,081, and 2,082 listings, respectively. Listing identities are disjoint, although different listings of the same product may occur in several sets. The demand simulator (environment) and initial critic are fitted independently against the supplied probability targets, selecting checkpoints by validation binary cross-entropy (BCE). The environment uses seed 48 and selects epoch 4; the initial critic uses seed 49 and selects epoch 12. The environment is then frozen, and online feedback is sampled as $y\sim\operatorname{Ber}(d(p,x,z))$. Eligible actor products require at least five listings, two distinct prices, two distinct descriptions and at least one listing in each supervised training set. Selection uses seed 48 and no demand targets or policy outcomes, assigning one training product and 20 other test products per category. The training products are Apple iPod Touch~5, Chaika~132M, Asus~X54H, Sony Xperia~C4, and Oysters T72X~3G, in the category order of Table~\ref{tab:exp-real}. Each run starts from the same reference actor and initial critic, with a seed determined by base seed 47 and context identity. Evaluation freezes the category's final actor and critic and uses each test product's own context and price scale.

Both demand networks receive the monetary price and a frozen advertisement embedding, obtained by averaging the hidden states from the last layer of the frozen reference Qwen model over nonpadding tokens. LoRA adapters are disabled during encoding, which is limited to 384 tokens. Writing this 1,536-dimensional embedding as $h_{\texttt{Qwen2.5}}(x)$, the critic takes the form
\begin{equation}
\widehat d_\phi(p,x,z)=\sigma\!\left(\operatorname{MLP}_\phi([h_{\texttt{Qwen2.5}}(x),\lambda(z)p])\right),
\label{eq:exp-real-network}
\end{equation}
with two hidden layers of 128 ReLU units; the environment has the same architecture and independently fitted parameters. Neither network receives a separate context vector, so context enters through the generated text and product-specific price conversion. The scale $\lambda(z)$ is the maximum retained listing price for the product, computed across all listing splits without using demand targets and held fixed throughout training and evaluation. Thus, the protocol assumes an available product price range, including for products unseen during actor optimization. Dividing monetary revenue $\lambda(z)p\,d(p,x,z)$ by $\lambda(z)>0$ preserves the unregularized revenue-maximizing price within $[0,\lambda(z)]$. Aggregate normalized revenue weights products equally relative to their price scales; the fixed KL coefficient also applies in these normalized units. The networks impose no price-monotonicity constraint, and prices below a product's historical minimum remain extrapolations.

Table~\ref{tab:exp-real-intervals} gives the bootstrap intervals over products at greater precision and the number of products with positive gains. \texttt{Trained} improves on 91 of 100 products: all 20 products in Home Appliances, Laptops, and Tablets/E-readers, 16 in Audio/Video, and 15 in Mobile Phones. Figure~\ref{fig:exp-real-detail} shows the corresponding product means. Thus, the positive overall gain is accompanied by positive category means, while the nine individual losses occur in Audio/Video and Mobile Phones.

\begin{figure}[!htbp]
\centering
\includegraphics[width=.72\linewidth]{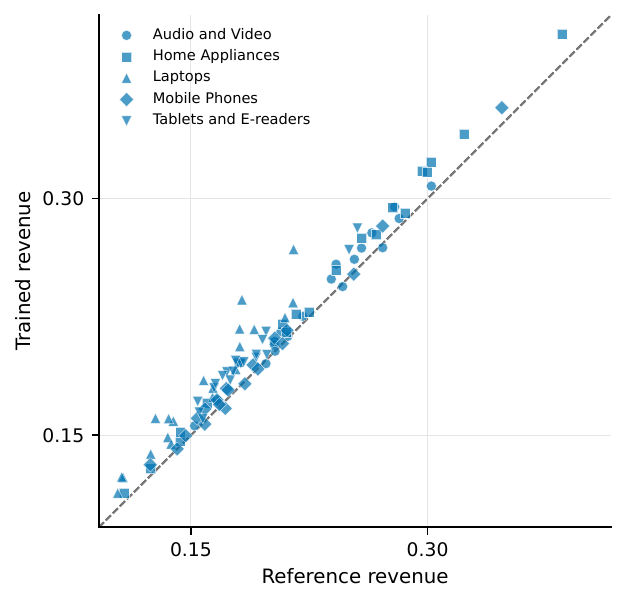}
\caption{Marketplace held-out revenues for the 100 test products (\texttt{Reference}, horizontal; \texttt{Trained}, vertical), with category-specific markers. Each coordinate averages 100 advertisements per policy and product; the dashed diagonal line indicates equal expected revenue.}
\label{fig:exp-real-detail}
\end{figure}

\begin{table}[!htb]
\centering\small
\caption{Marketplace revenue gains and improved-product counts for \texttt{Trained} over \texttt{Reference}. The 95\% intervals use 5,000 paired-product bootstrap resamples.}
\label{tab:exp-real-intervals}
\vspace{7pt}
\begin{tabular}{@{}lrrr@{}}
\toprule
Category & $\Delta u$ & 95\% interval for $\Delta u$ & Improved products \\
\midrule
 Audio/Video & \textbf{+0.005956} & [0.003590, 0.008313] & 16/20 \\
Home Appliances & \textbf{+0.011678} & [0.009055, 0.014272] & 20/20 \\
Laptops & \textbf{+0.022490} & [0.017165, 0.028486] & 20/20 \\
Mobile Phones & \textbf{+0.003765} & [0.001890, 0.005677] & 15/20 \\
 Tablets/E-readers & \textbf{+0.013943} & [0.011474, 0.016290] & 20/20 \\
Overall & \textbf{+0.011566} & [0.009672, 0.013632] & 91/100 \\
\bottomrule
\end{tabular}
\end{table}

\FloatBarrier
\subsection{Progressive Scarcity of Offline Data}
\label{app:exploration}
We assess sensitivity to the amount of data available for critic pretraining using Laptops, the category with the median number of critic-training listings. The actor-training product remains Asus~X54H, evaluation uses the same 20 held-out Laptop products, and the environment retains its full-data fit. From the $M=7{,}291$ critic-training listings across all categories, we construct subsets of size $D$, stratified by category and selected without using demand targets, corresponding approximately to 100, 75, 50, 25, and 0\% of the original data. The missing-data fraction is $g(D)=1-D/M$. We set the exploration distribution $\nu_\tau$ to be uniform on the 101-price grid and use
\begin{equation}
\varepsilon_\tau(D)=\frac{g(D)}{\sqrt{\max\{1,\tau-1\}}},
\qquad \tau\in[N].
\label{eq:exp-epsilon}
\end{equation}
With the remaining probability $1-\varepsilon_\tau(D)$, the critic selects the price maximizing estimated revenue for the current advertisement. The schedule increases exploration as offline support decreases and reduces it as online feedback accumulates. Its expected number of uniform-price offers is approximately $2Bg(D)\sqrt N$; this is a description of the schedule's scale, not a theoretical performance guarantee. At zero data, the critic starts from seeded random parameters and uses $\rho=0$, avoiding anchoring to an unfitted model. All runs receive 16,000 online outcomes and the same actor and critic update counts; the full-data endpoint has $\varepsilon_\tau=0$ and reuses the completed main run. Evaluation freezes the final actor and critic and disables price exploration.

Figure~\ref{fig:exp-exploration} and Table~\ref{tab:exp-exploration} report expected revenue over the same 20 products, using 100 advertisements per policy and product. Revenue ranges from $0.1802$ to $0.1895$ with 25-100\% of the initial data available, and falls to $0.1670$ at zero data, \textbf{7.32\%} below the full-data control. The algorithm therefore continues learning from online outcomes without supervised initialization, with a lower observed endpoint. The intermediate results are not monotone, and the single run per budget with overlapping product-bootstrap intervals does not establish that removing offline data improves performance. Because initial support and the exploration schedule vary together, the comparison measures the complete learning procedure under missing data rather than the isolated effect of exploration.

\begin{figure}[!htbp]
\centering
\includegraphics[width=.9\linewidth]{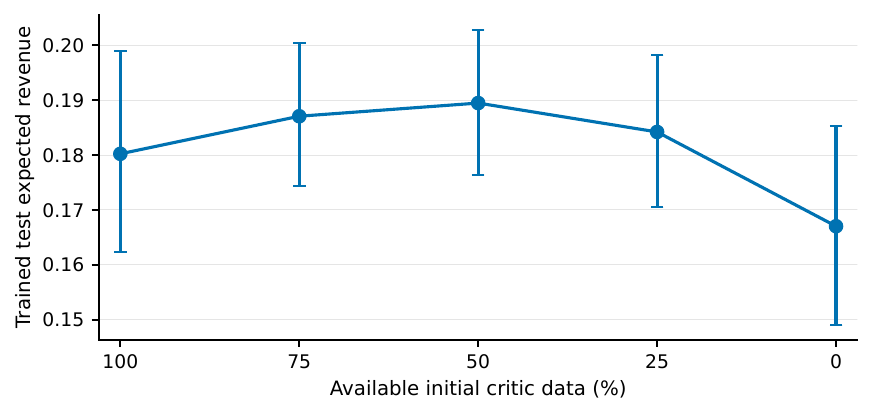}
\caption{\texttt{Trained} expected revenue as initial critic data decrease from 100\% to 0\%, using the exploration schedule in Equation~\ref{eq:exp-epsilon}. Points average the same 20 held-out Laptop products; the error bars show 95\% product-bootstrap intervals conditional on one training run per data budget.}
\label{fig:exp-exploration}
\end{figure}

\begin{table}[!htb]
\centering\small
\caption{Initial critic training data, offers sampled from the uniform exploration distribution and \texttt{Trained} expected revenue with 16,000 online outcomes. The 95\% intervals use 5,000 bootstrap resamples of the 20 held-out Laptop products.}
\label{tab:exp-exploration}
\vspace{7pt}
\begin{tabular}{@{}rrrrrr@{}}
\toprule
Initial data & All listings & Laptop listings & Uniform offers & Revenue & 95\% interval \\
\midrule
100\% & 7,291 & 190 & 0 & 0.180224 & [0.1624, 0.1990] \\
75\% & 5,468 & 142 & 342 & 0.187076 & [0.1743, 0.2004] \\
50\% & 3,645 & 95 & 703 & 0.189475 & [0.1764, 0.2028] \\
25\% & 1,823 & 48 & 1,068 & 0.184210 & [0.1705, 0.1982] \\
0\% & 0 & 0 & 1,409 & 0.167030 & [0.1490, 0.1854] \\
\bottomrule
\end{tabular}
\end{table}

\FloatBarrier
\subsection{Critic Predictions in the Leave-One-Out Baseline}
\label{app:baseline}
We examine the choice in Equation~\ref{eq:jpa-revenue-advantage} to average the critic's predicted revenues in the leave-one-out baseline. The diagnostic compares critic estimation error with the Bernoulli noise of an observed-reward baseline, keeping advertisements and prices fixed and performing no actor retraining. For each category, we sample 1,000 seeded batches of $B=32$ historical offers from the common held-out listing set, with replacement and reuse these batches across the five initial critic-data budgets in Appendix~\ref{app:exploration}. Each sampled occurrence is treated as an independent Bernoulli offer, including repeated listings; outcome noise is integrated analytically. Let $d_j=d(p_j,x_j,z_j)$ be the frozen environment probability and $\widehat d_j=\widehat d_{\phi_1}(p_j,x_j,z_j)$ the purchase probability predicted by the initial critic at the selected data budget. For a fixed batch and position $b\in[B]$, the target baseline and its two estimators are
\begin{align}
L_b&=\frac1{B-1}\sum_{j\ne b}p_jd_j, &
L_b^{\mathrm{obs}}&=\frac1{B-1}\sum_{j\ne b}p_jy_j, &
L_b^{\mathrm{crit}}&=\frac1{B-1}\sum_{j\ne b}p_j\widehat d_j.
\label{eq:exp-baseline-estimators}
\end{align}
Conditional on the sampled offers, their mean squared errors relative to $L_b$ are
\begin{equation}
\begin{aligned}
\operatorname{MSE}(L_b^{\mathrm{obs}})&=\frac{\sum_{j\ne b}p_j^2d_j(1-d_j)}{(B-1)^2},\quad \operatorname{MSE}(L_b^{\mathrm{crit}})=\bigg(\frac1{B-1}\sum_{j\ne b}p_j(\widehat d_j-d_j)\bigg)^2.
\end{aligned}
\label{eq:exp-baseline-mse}
\end{equation}
The first expression is the variance due to Bernoulli purchase outcomes and the second is squared critic error. We average each over batches and all $B$ positions, then divide the mean critic error by the mean error of the baseline using observed rewards. Values below one indicate that critic error is smaller than the Bernoulli noise obtained by averaging rewards, as in standard REINFORCE leave-one-out (RLOO).

\begin{figure}[H]
\centering
\includegraphics[width=.9\linewidth]{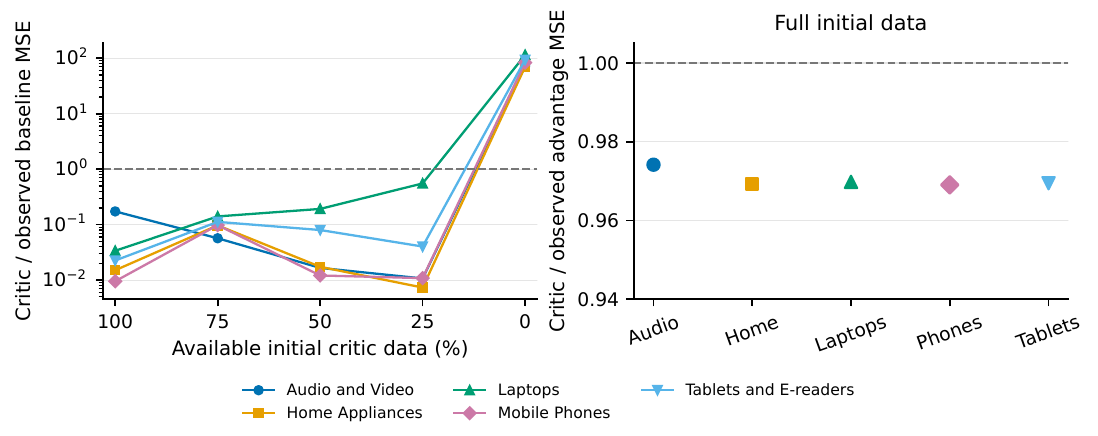}
\caption{Baseline and revenue-advantage MSE ratios for critic-based versus observed-reward estimators. Left: baseline MSE as offline support decreases, on a logarithmic scale. Right: advantage MSE at full data, including purchase noise from the offer whose advantage is estimated. The dashed line indicates equal conditional MSE.}
\label{fig:exp-baseline}
\end{figure}

At full data, the baseline MSE ratios are $0.1734$, $0.0151$, $0.0337$, $0.0095$, and $0.0226$ in the category order of Table~\ref{tab:exp-real}. All categories remain below one with 25-75\% data, although the ratios are not monotone. At zero data, the randomly initialized critic has approximately 70-115 times the observed-baseline error. Figure~\ref{fig:exp-baseline} therefore shows a substantial reduction in baseline error after supervised fitting, together with the failure of the same substitution for an unfitted critic. Purchase noise from offer $b$ remains under either baseline: relative to the target $p_bd_b-L_b$,
\begin{equation}
\operatorname{MSE}\!\left(p_by_b-L_b^{\cdot}\right)
=p_b^2d_b(1-d_b)+\operatorname{MSE}(L_b^{\cdot}),
\qquad \cdot\in\{\mathrm{obs},\mathrm{crit}\}.
\label{eq:exp-advantage-mse}
\end{equation}
For Laptops at full data, baseline MSE decreases from $0.002256$ to $0.000076$, while revenue-advantage MSE decreases from $0.072191$ to $0.070011$, a reduction of \textbf{3.02\%}. This common noise floor limits the effect of baseline improvement on the complete advantage. The diagnostic excludes the KL term and the gradient of the log policy probability, and its historical batches may mix product contexts within a category. It therefore measures baseline and advantage error without establishing a reduction in full policy-gradient variance or an increase in final-policy revenue.

\FloatBarrier
\phantomsection
\label{exp:appendix-end}
\end{document}